    \newtcbox{\feedback}{nobeforeafter,colframe=black,colback=white,boxrule=0.5pt,arc=2pt,
      boxsep=0pt,left=2pt,right=2pt,top=2pt,bottom=2pt,tcbox raise base}
    \newtheorem{prop}{Proposition}[section]
    \newtheorem*{prop*}{Proposition}
    \newtheorem{lem}{Lemma}[section]
    \theoremstyle{definition}
    \newtheorem{rem}{Remark}
    \newtheorem{defn}{Definition}
    \newtheorem{example}{Example}
\newcolumntype{L}[1]{>{\raggedright\let\newline\\\arraybackslash}m{#1}}
\newcolumntype{C}[1]{>{\centering\let\newline\\\arraybackslash\hspace{0pt}}m{#1}}
\newcolumntype{R}[1]{>{\raggedleft\let\newline\\\arraybackslash\hspace{0pt}}m{#1}}
\newlength\ubwidth
\newcommand\Item[1][]{%
  \ifx\relax#1\relax  \item \else \item[#1] \fi
  \abovedisplayskip=0pt\abovedisplayshortskip=0pt~\vspace*{-\baselineskip}}
	\newcommand{\bracks}[1]{\left[#1\right]}
	\newcommand{\expe}[1]{\mathbb{E}\bracks{#1}}
	\newcommand{\reals}{\mathbb{R}}
\newcommand{\ytilde}{\tilde{y}}
\author{Jonathan Roth\thanks{Brown University. Email: \href{mailto:jonathanroth@brown.edu}{jonathanroth@brown.edu}} \and Pedro H.C. Sant'Anna\thanks{Vanderbilt University and Microsoft. Email: \href{mailto:pedro.h.santanna@vanderbilt.edu}{pedro.h.santanna@vanderbilt.edu}} }
\date{\today}
\title{When Is Parallel Trends Sensitive to Functional Form?\thanks{We thank Isaiah Andrews, Otavio Bartalotti, Kirill Borusyak, Kevin Chen, Carol Caetano, Arin Dube, Dalia Ghanem, Andrew Goodman-Bacon, Ryan Hill, Martin Huber, Peter Hull, Guido Imbens, Ariella Kahn-Lang, Kevin Lang, Arthur Lewbel, Daniel Millimet, \'{A}ureo de Paula, Ashesh Rambachan, Adrienne Sabety, Yuya Sasaki, Jesse Shapiro, Tymon S\l{}oczy\'{n}ski, Alex Torgovitsky, Kaspar W\"uthrich, three anonymous referees, and seminar participants at Brandeis, University of Chicago, Hebrew University, LMU Munich, UC-Berkeley, UC-Davis, the Southern Economic Association annual conference, and the Chamberlain Seminar for helpful comments and conversations.}}
\begin{document}
\begin{bibunit}

\maketitle


\begin{abstract}
This paper assesses when the validity of difference-in-differences depends on functional form. We provide a novel characterization: the parallel trends assumption holds under all strictly monotonic transformations of the outcome if and only if a stronger ``parallel trends''-type condition holds for the cumulative distribution function of untreated potential outcomes. This condition for parallel trends to be insensitive to functional form is satisfied if and essentially only if the population can be partitioned into a subgroup for which treatment is effectively randomly assigned and a remaining subgroup for which the distribution of untreated potential outcomes is stable over time. These conditions have testable implications, and we introduce falsification tests for the null that parallel trends is insensitive to functional form.
\end{abstract}

\newpage

\section{Introduction}
This paper studies when the parallel trends assumption used for identification in difference-in-differences (DiD) designs is insensitive to functional form. The motivation for studying this property is that it often may not be clear from theory that parallel trends should hold for a particular choice of functional form. For example, a researcher may be interested in the average treatment effect on the treated (ATT) in levels for a particular policy, but it may not be obvious that state-level variation in the policy generates parallel trends specifically in levels rather than in logs or some other transformation. The DiD design thus will often be more credible if its validity does not depend on functional form. Our results make precise the conditions needed to ensure this form of robustness, and suggest that researchers should be careful to give a functional form-specific justification in settings where these conditions are not plausible.

We provide two characterizations of when parallel trends is insensitive to functional form, in the sense that it holds for all strictly monotonic transformations of the outcome. First, we show that parallel trends is insensitive to functional form if and only if a ``parallel trends''-type condition holds for the entire cumulative distribution function (CDF) of $Y(0)$. We further show that this condition can be satisfied if and essentially only if we are in one of three cases: (i) when treatment is as-if randomly assigned, (ii) when the distribution of $Y(0)$ is stable over time for each treatment group; and (iii) a hybrid of the first two cases in which the population is a mixture of a sub-population that is (as-if) randomized between treatment and control and another sub-population that has stable untreated potential outcome distributions over time. In settings where the treatment is not (as-if) randomly assigned, the assumptions needed for the insensitivity of parallel trends to functional form will thus often be quite restrictive.

These conditions have testable implications, and we introduce tests for the null hypothesis that parallel trends is insensitive to functional form. Such tests can be useful in flagging situations where the researcher should be particularly careful about justifying the parallel trends assumption for the specific functional form chosen for the analysis. We illustrate how the proposed tests can be used in a stylized analysis of the effects of the minimum wage.


Previous papers have noted that the parallel trends assumption may hold in logs but not levels or vice versa \citep[e.g.][]{meyer_natural_1995,AtheyImbens(06), kahn-lang_promise_2020}. However, to our knowledge we provide the first full characterization of when it is sensitive to functional form. The conditions that we derive are also related to, but distinct from, assumptions introduced previously for identifying distributional treatment effects in DiD settings \citep[e.g.,][]{AtheyImbens(06)}; see Remark \ref{rem: cic} for details.

\section{Model\label{sec: model}}

We consider a canonical two-period difference-in-differences model. There are two periods $t=0,1$, and units indexed by $i$ come from one of two populations denoted by $D_i \in \{0,1\}$. Units in the $D_i=1$ (treated) population receive treatment beginning in period $t=1$, and units in the $D_i=0$ (comparison) population never receive treatment. We denote by $Y_{it}(1), Y_{it}(0)$ the potential outcomes for unit $i$ in period $t$ under treatment and control, respectively, and we observe the outcome $Y_{it} = D_i Y_{it}(1) + (1-D_i) Y_{it}(0)$. We assume that there are no anticipatory effects of treatment, so that $Y_{i0}(1) = Y_{i0}(0)$ for all $i$. The average treatment effect on the treated is defined as
$$\tau_{ATT} = \expe{Y_{i1}(1) - Y_{i1}(0) \,|\, D_i = 1}.$$


\begin{rem}[Implications for other settings]
We consider a two-period model for expositional simplicity. Our results have immediate implications for DiD settings with multiple periods and staggered treatment timing, as in these contexts researchers often impose that the two-group, two-period version of parallel trends holds for many combinations of groups and time periods; see \citet{roth_et_al_DiD_survey} for a review. Similarly, our identification results would go through in settings with conditional parallel trends \citep[e.g.,][]{abadie_semiparametric_2005, santanna_doubly_2020} if all probability statements were conditional on covariates. 
\end{rem}



\section{Invariance of Parallel Trends\label{sec: did}}

The classical assumption that allows for point identification of the ATT in the DiD design is the parallel trends assumption, which imposes that
\begin{equation}
\expe{ Y_{i1}(0) \,|\, D_i =1} - \expe{ Y_{i0}(0) \,|\, D_i = 1 } = \expe{ Y_{i1}(0) \,|\, D_i =0}  - \expe{ Y_{i0}(0) \,|\, D_i = 0}. \label{eqn: parallel trends assumption - superpop}    
\end{equation}

\noindent Under the parallel trends assumption, the ATT is identified: $\tau_{ATT} = (\mu_{11} - \mu_{10}) - (\mu_{01} - \mu_{00})$, where $\mu_{dt} = \expe{Y_{it} \,|\, D_i=d}$. We assume throughout that the four expectations in (\ref{eqn: parallel trends assumption - superpop}) exist and are finite. Following \citet{AtheyImbens(06)}, we say that the parallel trends assumption is invariant to transformations if the parallel trends assumption holds for all strictly monotonic transformations of the outcome.

\begin{defn}\label{def: invariance to scale}
We say that the parallel trends assumption is invariant to transformations (a.k.a. insensitive to functional form)  if 
\begin{multline*}
\expe{ g(Y_{i1}(0)) \,|\, D_i =1} - \expe{ g(Y_{i0}(0)) \,|\, D_i = 1 } 
= \expe{ g(Y_{i1}(0)) \,|\, D_i= 0} - \expe{ g(Y_{i0}(0)) \,|\, D_i = 0}
\end{multline*}
\noindent for all strictly monotonic functions $g$ such that the expectations above are finite.
\end{defn}

Our first result characterizes when parallel trends is invariant to transformations.

\begin{prop} \label{prop: invariance to transformation did - superpop}
Parallel trends is invariant to transformations if and only if
\begin{equation}
F_{Y_{i1}(0) | D_i=1}(y) - F_{Y_{i0}(0) | D_i=1}(y) = F_{Y_{i1}(0) | D_i=0}(y) - F_{Y_{i0}(0) | D_i=0}(y), \text{ for all } y\in\reals \label{eqn: parallel trends for cdfs - superpop}\end{equation}
\noindent where $F_{Y_{it}(0) | D_i = d}(y)$ is the cumulative distribution function of $Y_{it}(0) \,|\, D_i = d$. 
\end{prop}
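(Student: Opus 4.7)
The plan is to prove the two directions of the equivalence separately. Throughout, let
\[
H_d(y) := F_{Y_{i1}(0)|D_i=d}(y) - F_{Y_{i0}(0)|D_i=d}(y),
\]
so that condition (\ref{eqn: parallel trends for cdfs - superpop}) reads $H_1 \equiv H_0$, and note that for any $g$ with $\expe{g(Y_{it}(0))|D_i=d}$ finite for $t \in \{0,1\}$,
\[
\expe{g(Y_{i1}(0))|D_i=d} - \expe{g(Y_{i0}(0))|D_i=d} = \int g(y)\,dH_d(y),
\]
where the right-hand side is the Lebesgue--Stieltjes integral against the signed measure induced by $H_d$.

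For the \emph{sufficiency} direction, I would observe that $H_d$ is right-continuous (as a difference of right-continuous CDFs), has total variation at most $2$, and satisfies $H_d(y) \to 0$ as $y \to \pm\infty$ (since the two CDFs share the same tail limits). Such a function uniquely determines a finite signed Borel measure $\nu_d$ on $\reals$ via $\nu_d((-\infty,y]) = H_d(y)$. The hypothesis $H_1 \equiv H_0$ then forces $\nu_1 = \nu_0$, so $\int g\,d\nu_1 = \int g\,d\nu_0$ for every $g$ integrable against $|\nu_d|$. Integrability of any strictly monotonic $g$ satisfying Definition~\ref{def: invariance to scale}'s finiteness assumption follows because $|\nu_d|$ is dominated by $dF_{Y_{i1}(0)|D_i=d} + dF_{Y_{i0}(0)|D_i=d}$, under each of which $g$ has finite expectation by hypothesis. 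The equality of these integrals is precisely parallel trends for $g$.

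For the \emph{necessity} direction, I would approximate $\Ind\{x > y^*\}$ by a family of strictly increasing, bounded functions. A convenient choice is $g_{y^*,\epsilon}(x) := \sigma\!\left((x-y^*)/\epsilon\right)$, where $\sigma$ is the standard logistic CDF; each $g_{y^*,\epsilon}$ is strictly monotonic and lies in $[0,1]$. Applying the invariance assumption to $g_{y^*,\epsilon}$ and letting $\epsilon \downarrow 0$, bounded convergence yields $\expe{g_{y^*,\epsilon}(Y_{it}(0))|D_i=d} \to 1 - F_{Y_{it}(0)|D_i=d}(y^*)$ at every $y^*$ that is a continuity point of the four CDFs involved. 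The limiting form of the parallel trends equality then reads $H_1(y^*) = H_0(y^*)$ on this co-countable, dense set. Right-continuity of each $H_d$ extends the identity to every $y^* \in \reals$.

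The main obstacle is the necessity step: strictly monotonic functions do not include the natural indicator test functions, so one must approximate and pass to the limit. The one subtlety is that common atoms of the four distributions could in principle obstruct pointwise convergence of $g_{y^*,\epsilon}$ at $x = y^*$, but this is handled by first establishing the identity on the dense set of joint continuity points and then invoking right-continuity of the $H_d$. The sufficiency direction, by contrast, reduces to the standard uniqueness of the signed measure associated with a right-continuous BV function vanishing at $-\infty$, which is routine.
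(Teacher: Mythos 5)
Your proof is correct, and the necessity direction takes a genuinely different route from the paper's. The paper avoids any limiting argument by applying the invariance hypothesis to just two strictly monotonic functions --- the identity $g_1(y)=y$ and $g_2(y) = y - \Ind[y\leq \ytilde]$ (which is strictly increasing because the downward jump of the indicator is offset by the linear term) --- and subtracting the two resulting equalities to obtain the CDF identity \emph{exactly} at every $\ytilde\in\reals$, atoms included. Your approach instead approximates the indicator by bounded, smooth, strictly increasing logistic functions and passes to the limit, which forces you to first establish the identity on the co-countable set of joint continuity points and then invoke right-continuity of the $H_d$ to cover atoms; you handle that subtlety correctly. The trade-off is real: the paper's trick is shorter and exact but uses the unbounded test function $g_1(y)=y$, so it leans on the standing assumption that the four first moments are finite, whereas your argument uses only bounded test functions and would go through even without any moment assumptions. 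Your sufficiency direction, via the signed measures induced by the $H_d$ and domination of $|\nu_d|$ by $dF_{Y_{i1}(0)|D_i=d}+dF_{Y_{i0}(0)|D_i=d}$, is a more measure-theoretically explicit version of the paper's one-line ``integrate both sides'' step; both are fine.
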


\begin{proof}
If (\ref{eqn: parallel trends for cdfs - superpop}) holds, then from integrating on both sides of the equation it is immediate that 
\begin{equation}
\int g(y) d F_{Y_{i1}(0)|D_i=1} - \int g(y) d F_{Y_{i0}(0)|D_i=1} = \int g(y) d F_{Y_{i1}(0)|D_i=0} - \int g(y) d F_{Y_{i0}(0)|D_i=0} \label{eqn: integral formulation - invariance to transformations} \end{equation}
\noindent for any strictly monotonic $g$ such that the integrals exist and are finite, and hence parallel trends is invariant to transformations. 

Conversely, if parallel trends is invariant to transformations, then (\ref{eqn: integral formulation - invariance to transformations}) holds for every strictly monotonic $g$ such that the expectations exist and are finite. In particular, it holds for the identity map $g_1(y) = y$ as well as the map $g_2(y) = y - 1[y\leq \ytilde]$ for any given $\ytilde \in \reals$. Then, it follows that 
\begin{align*}
& \int y d F_{Y_{i1}(0)|D_i=1} - \int y d F_{Y_{i0}(0)|D_i=1} = \int y d F_{Y_{i1}(0)|D_i=0} - \int y d F_{Y_{i0}(0)|D_i=0},  \text{ and } \\
& \int (y - 1[y\leq\ytilde]) d F_{Y_{i1}(0)|D_i=1} - \int (y - 1[y\leq\ytilde]) d F_{Y_{i0}(0)|D_i=1} = \\
&\int (y - 1[y\leq\ytilde]) d F_{Y_{i1}(0)|D_i=0} - \int (y - 1[y\leq\ytilde]) d F_{Y_{i0}(0)|D_i=0}.
\end{align*}
\noindent Subtracting the second equation from the first in the previous display, we obtain
\begin{small}
\begin{equation*}
\int 1[y\leq\ytilde] d F_{Y_{i1}(0)|D_i=1} - \int  1[y\leq\ytilde] d F_{Y_{i0}(0)|D_i=1} = \int 1[y\leq\ytilde] d F_{Y_{i1}(0)|D_i=0} - \int 1[y\leq\ytilde] d F_{Y_{i0}(0)|D_i=0},   
\end{equation*}
\end{small}
\noindent which is equivalent to (\ref{eqn: parallel trends for cdfs - superpop}) by the definition of the CDF and the fact that $\ytilde$ is arbitrary.
\end{proof}

\noindent Proposition \ref{prop: invariance to transformation did - superpop} shows that parallel trends is invariant to transformations if and only if a ``parallel trends''-type assumption holds for the CDFs of the untreated potential outcomes. We note that if the outcome is continuous, then parallel trends of CDFs is equivalent to parallel trends of PDFs (almost everywhere). The following result provides a characterization of how distributions satisfying this assumption can be generated. 

\begin{prop}\label{prop: generation of parallel trends of cdfs}
Suppose that the distributions $Y_{it}(0)|D_i=d$ for all $d,t \in \{0,1\}$ have a Radon-Nikodym density with respect to a common dominating, positive $\sigma$-finite measure.
Then parallel trends is invariant to transformations if and only if there exists $\theta \in [0,1]$ and CDFs $G_{t}(\cdot)$ and $H_{d}(\cdot)$ depending only on time and group, respectively, such that
\begin{equation}
F_{Y_{it}(0) | D_i = d}(y) = \theta G_{t}(y) + (1-\theta) H_{d}(y) \text{ for all } y\in \reals \text{ and }  d,t \in \{ 0,1 \}. \label{eqn: decomp of cdfs into mixtures}
\end{equation}
\end{prop}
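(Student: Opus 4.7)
The easy direction ($\Leftarrow$) is a direct verification using Proposition \ref{prop: invariance to transformation did - superpop}: if $F_{Y_{it}(0)|D_i=d}(y) = \theta G_t(y) + (1-\theta)H_d(y)$, then $F_{Y_{i1}(0)|D_i=d}(y) - F_{Y_{i0}(0)|D_i=d}(y) = \theta(G_1(y) - G_0(y))$ is independent of $d$, so (\ref{eqn: parallel trends for cdfs - superpop}) holds and transformation-invariance follows from Proposition \ref{prop: invariance to transformation did - superpop}.

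For ($\Rightarrow$), the plan is to work with the Radon--Nikodym densities $f_{dt}$ of $Y_{it}(0) \mid D_i = d$ against the common dominating measure $\mu$. Reading (\ref{eqn: parallel trends for cdfs - superpop}) in density form yields the ``no-interaction'' identity
\begin{equation*}
f_{11}(y) + f_{00}(y) = f_{01}(y) + f_{10}(y), \qquad \mu\text{-a.e.}
\end{equation*}
The central observation is that, at each $y$, a $2\times 2$ array with vanishing interaction admits an additive decomposition $f_{dt}(y) = a_t(y) + b_d(y)$, unique up to adding a scalar $c(y)$ to all $a_t(y)$ and subtracting it from all $b_d(y)$. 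The task then reduces to choosing $c(y)$ pointwise so that $a_t, b_d \geq 0$ and so that $\int a_t\, d\mu$ (resp.\ $\int b_d\, d\mu$) is independent of $t$ (resp.\ $d$); the common values will serve as $\theta$ and $1-\theta$.

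I would take $c(y) := \min\{f_{00}(y), f_{10}(y)\}$, producing
\begin{equation*}
a_0 = c,\quad a_1 = c + f_{01} - f_{00},\quad b_0 = f_{00} - c,\quad b_1 = f_{10} - c.
\end{equation*}
Nonnegativity of $b_0, b_1$ is immediate; for $a_1$, a short case split on the sign of $f_{00} - f_{10}$ combined with the no-interaction identity gives the equivalent form $a_1 = \min\{f_{01}, f_{11}\}$, whence $a_1 \geq 0$. Since $\int(a_1 - a_0)\, d\mu = \int(f_{01} - f_{00})\, d\mu = 0$ and $\int(b_1 - b_0)\, d\mu = \int(f_{10} - f_{00})\, d\mu = 0$, setting $\theta := \int c\, d\mu \in [0,1]$ yields $\int a_0\, d\mu = \int a_1\, d\mu = \theta$ and $\int b_0\, d\mu = \int b_1\, d\mu = 1 - \theta$. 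For $\theta \in (0,1)$, the definitions $g_t := a_t/\theta$ and $h_d := b_d/(1-\theta)$ give bona fide densities, and integrating $f_{dt} = \theta g_t + (1-\theta) h_d$ over $(-\infty, y]$ produces the claimed representation, with $G_t, H_d$ the CDFs of $g_t, h_d$.

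The main obstacle is handling the boundary cases $\theta \in \{0, 1\}$, where the division by $\theta$ or $1-\theta$ fails. When $\theta = 0$, $c \equiv 0$ $\mu$-a.e., so the distributions of $Y_{i0}(0)\mid D_i = 0$ and $Y_{i0}(0)\mid D_i = 1$ are mutually singular. I would then argue that parallel trends forces $f_{d1} = f_{d0}$ $\mu$-a.e.\ for each $d$: on the support of $f_{00}$ one has $f_{10} = 0$, so $f_{01} = f_{00} + f_{11} \geq f_{00}$, and the unit-mass constraint $\int f_{01} = 1 = \int f_{00}$ forces equality $\mu$-a.e., with a symmetric argument on the support of $f_{10}$ giving $f_{11} = f_{10}$. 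The representation then holds with $H_d := F_{Y_{i0}(0)|D_i = d}$ and $G_t$ chosen arbitrarily. The case $\theta = 1$ is symmetric, with the roles of time and group swapped.
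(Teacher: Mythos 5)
Your proposal is correct, and it reaches the decomposition by a genuinely different construction than the paper's. The paper also passes to densities and uses pointwise minima with positive-part residuals, but it packages the argument in a standalone lemma applied to the two \emph{time}-pairs within each group: $H_d$ is the normalized $\min_t f_{dt}$, the residuals $(f_{d1}-f_{d0})_+/\theta$ and $(f_{d0}-f_{d1})_+/\theta$ become $G_1, G_0$ (which coincide across groups because they depend only on the common difference $f_{d1}-f_{d0}$), and $\theta$ comes out as the total variation distance between the period-$0$ and period-$1$ distributions. You instead treat the density identity as a no-interaction condition on a $2\times 2$ array and resolve the one-parameter family of additive decompositions by taking the minimum over \emph{groups} within each time period, so that $G_t$ is the normalized $\min_d f_{dt}$ and your $\theta = \int \min\{f_{00},f_{10}\}\,d\mu$ is one minus the total variation distance between the two groups at baseline. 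These are different mixtures with different values of $\theta$ in general (e.g., under pure random assignment with nonstationary outcomes your construction gives $\theta=1$ while the paper's gives $\theta$ equal to the across-time TV distance); both satisfy the proposition, which only asserts existence. Your version has the virtue of tying $\theta$ directly to the ``as-if randomized overlap'' interpretation of Case 3, while the paper's ties it to stationarity and is more modular. The only loose stitch is the claim that $\theta=1$ is ``symmetric'' to $\theta=0$ under swapping time and group --- it is not literally, since your $\theta$ is defined asymmetrically --- but the needed argument is immediate: $\int \min\{f_{00},f_{10}\}\,d\mu = 1$ together with $\int f_{00}\,d\mu=\int f_{10}\,d\mu=1$ forces $f_{00}=f_{10}$ a.e., and the no-interaction identity then gives $f_{01}=f_{11}$ a.e., so the representation holds with $G_t$ the observed common distributions and $H_d$ an arbitrary CDF.
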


\begin{proof}
See Appendix \ref{sec: appendix proof of decomp of cdfs into mixtures}.
\end{proof}
\noindent Proposition \ref{prop: generation of parallel trends of cdfs} shows that parallel trends of CDFs is satisfied if and only if the untreated potential outcomes for each group and time can be represented as a mixture of a common time-varying distribution that does not depend on group (with weight $\theta$) and a group-specific distribution that does not depend on time (with weight $1-\theta$). This implies that there are three cases in which parallel trends will be insensitive to functional form, depending on the value of $\theta$. 

\paragraph{Case 1: Random Assignment. $\boldsymbol{(\theta =1)}$.} The case $\theta =1$ corresponds with imposing that the distributions of $Y(0)$ for the treated and comparison groups are the same in each period, $F_{Y_{it}(0) | D_i = 1}(y) = F_{Y_{it}(0) | D_i = 0}(y)$, for $t=0,1$ and all $y$, as occurs under (as-if) random assignment of treatment.

\paragraph{Case 2: Stationary $\boldsymbol{Y(0)}$. $\boldsymbol{(\theta =0})$.} The case $\theta =0$ corresponds with imposing that the distribution of $Y(0)$ for both the treated and comparison populations does not depend on time, i.e. $F_{Y_{i1}(0) | D_i = d}(y) = F_{Y_{i0}(0) | D_i = d}(y)$, for $d=0,1$ and all $y$.

\paragraph{Case 3: Non-random assignment and non-stationarity. $\boldsymbol{(\theta \in (0,1)})$.} The case $\theta \in (0,1)$ corresponds with a hybrid of Cases 1 and 2. In each period, we can partition the treated and comparison groups so that $\theta$ fraction of each group have the same distribution $(G_t)$, as if they were randomly assigned, and $1-\theta$ fraction have a group-specific distribution that does not depend on time ($H_d$). This is perhaps the most interesting case, since in the other two cases a single difference (either across time or across groups) would suffice to identify the ATT.

\begin{rem}[Use of phrase ``essentially only if''] \label{rem: essentially only if} The simplest way for Case 3 to hold is to have a $\theta$ fraction of the population that is as-if randomized between treatment and control and a $1-\theta$ fraction that has stationary potential outcomes. In principle, though, it is possible for the units in the $\theta$ and $1-\theta$ partitions to change across periods in Case 3, although it is difficult to imagine scenarios where this would be the case in practice. We thus write in the abstract that parallel trends can be invariant to transformations ``essentially only if'' the population can be partitioned into groups such that one is effectively randomized between treatment and control, and the others have stable potential outcomes over time.
\end{rem}


\begin{example}[Binary outcomes]
Suppose the outcome is binary, $Y_i \in \{0,1\}$. Then for any $y \in [0,1)$, $F_{Y_{i1}(0) | D_i = 1}(y) = 1 - \expe{ Y_{i1}(0) \,|\, D_i = 1}$, and analogously for the other CDFs. Thus, (\ref{eqn: parallel trends for cdfs - superpop}) is equivalent to the parallel trends assumption (\ref{eqn: parallel trends assumption - superpop}). Proposition \ref{prop: invariance to transformation did - superpop} thus implies that whenever the parallel trends assumption holds, it also holds for all monotonic transformations of the outcome (i.e. replacing $\{0,1\}$ with $\{a,b\}$). This is intuitive, as the expectation of a binary outcome fully characterizes its distribution.
\end{example}

\begin{example}[Normally distributed outcomes] \label{example: normal example}
If the distribution of $Y_{it}(0) | D_i = d$ is normally distributed with positive variance for all $(d,t)$, then (\ref{eqn: parallel trends for cdfs - superpop}) can hold only if either both groups have the same distribution of $Y(0)$ in each period or the group-specific distributions of $Y(0)$ do not change over time (Cases 1 and 2).
\end{example}

\begin{example}[Mixtures of distributions]\label{ex:case3}

An example of Case 3 is illustrated in Figure \ref{fig:case3_example}. The distributions of potential outcomes are generated by equation (\ref{eqn: decomp of cdfs into mixtures}), with $\theta = \frac{1}{2}$, $G_t \sim lognormal(2+t,1)$, and $H_d \sim lognormal(3+d,1)$. As can be seen in the figure, the distributions of $Y(0)$ for the treatment and comparison groups differ from each other in both time periods and change over time. However, the change in the PDFs is the same for both groups, and thus our results imply that parallel trends holds for all monotonic transformations of the outcome, as illustrated in Table \ref{tab:case3} for the levels and log transformations. These distributions of $Y(0)$ could arise, for example, if half the population is younger workers, who have untreated earnings $G_t$ in period $t$ regardless of treatment group, and the other half is older workers who have untreated earnings $H_d$ in both periods.
\end{example}
\begin{figure}[!htp]
    \centering
    \includegraphics[width = 1\linewidth]{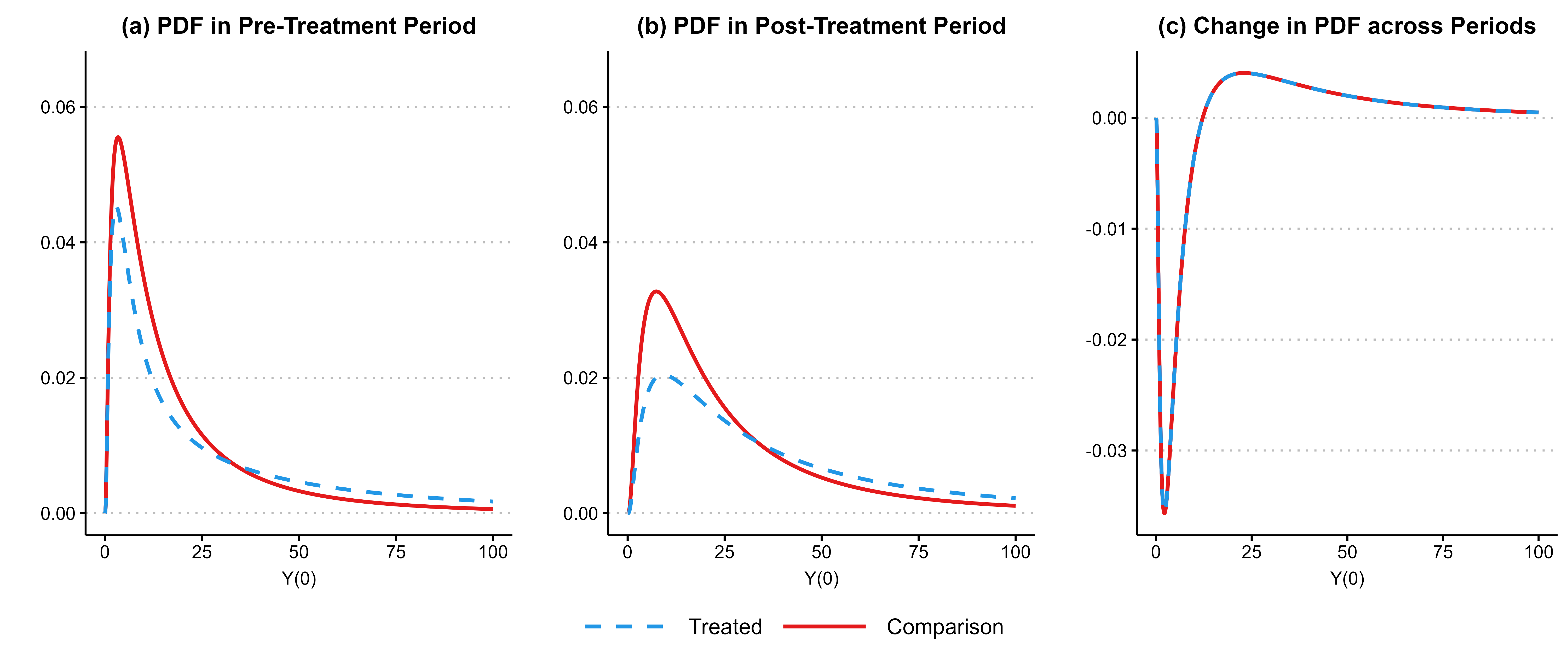}\\
    \caption{Illustration of Case 3}
    \label{fig:case3_example}
    \begin{minipage}{1\textwidth} 
    {\footnotesize Notes: Data generating process as discussed in Example \ref{ex:case3}.\par}
\end{minipage}
\end{figure}

\begin{table}[H]
\caption{Mean of $g(Y(0))$ by Group}
\label{tab:case3}
\centering
\begin{adjustbox}{ max width=1\linewidth, max totalheight=1\textheight}
\begin{threeparttable}
\begin{tabular}{@{}llrrr@{}} \toprule
g & Group      & Pre-treatment & Post-treatment & Change \\ \midrule
Levels           & Comparison & 22.65         & 33.12          & 10.47   \\
Levels           & Treated  & 51.10         & 61.57          & 10.47   \\
Log              & Comparison & 2.50          & 3.00           & 0.50    \\
Log              & Treated  & 3.00          & 3.50           & 0.50   \\
\bottomrule
\end{tabular}
\begin{tablenotes}[para,flushleft]
\small{
Notes: Data generating process as discussed in Example \ref{ex:case3}. }
\end{tablenotes}
\end{threeparttable}
\end{adjustbox}
\end{table}

\subsection{Relationship to Prior Work}

\begin{rem}[Empirical papers using parallel trends of CDFs] \label{rem: empirical papers}
Several empirical papers have used a DiD design to estimate the effects of a treatment on the distribution of an outcome. For example, \citet{almond_inside_2011} use DiD to estimate the effect of the food stamps program on the distribution of child birth weight. The validity of this analysis requires parallel trends throughout the birth weight distribution --- precisely the condition that Proposition \ref{prop: invariance to transformation did - superpop} shows is needed for the usual parallel trends assumption to be insensitive to functional form. Other recent papers (e.g. \citet{cengiz_effect_2019}, \citet{stepner_insurance_2019}) have conducted related distributional analyses. 
\end{rem}


\begin{rem}[Relationship to distributional DiD Models] \label{rem: cic}
Condition (\ref{eqn: parallel trends for cdfs - superpop}) implies that the full counterfactual distribution for the treated group, $Y_{i1}(0) | D_i=1$, is identified. In particular, by re-arranging terms in (\ref{eqn: parallel trends for cdfs - superpop}), we obtain that 
\begin{equation}
F_{Y_{i1}(0) | D_i=1}(y) = F_{Y_{i0}(0) | D_i=1}(y) + F_{Y_{i1}(0) | D_i=0}(y) - F_{Y_{i0}(0) | D_i=0}(y) \label{eqn: parallel trends cdf - rearranged 2},
\end{equation}
\noindent where the terms on the right-hand side correspond with CDFs of identified distributions. Condition (\ref{eqn: parallel trends for cdfs - superpop}) may thus be reminiscent of distributional DiD models such as \citet{AtheyImbens(06)}'s Changes-in-Changes (CiC) model, which infers the counterfactual distribution by assuming that the mapping between quantiles of $Y(0)$ for the treated and comparison populations remains stable over time, i.e.,
\begin{align}
F_{Y_{i1}(0) | D_i=1}(y) = F_{Y_{i0}(0)|D_i=1}( F_{Y_{i0}(0)|D_i=0}^{-1}( F_{Y_{i1}(0) | D_i=0} )(y)  )). \label{eqn: cic model}
\end{align}
The two ways of inferring the counterfactual distribution agree in Cases 1 and 2 above, but are generally non-nested otherwise. For instance, the CiC model does not hold in the example of Case 3 shown in Figure \ref{fig:case3_example} above. Conversely, one can construct examples where the CiC model holds when $Y(0)$ is normally distributed conditional on group and time period (with distinct means by group/period), whereas the parallel trends of CDFs assumption does not hold in this case as discussed in Example \ref{example: normal example}.
It is also straightforward to show that condition (\ref{eqn: parallel trends cdf - rearranged 2}) is non-nested with the distributional DiD models of \citet{Bonhomme_Sauder_2011} and \citet{callaway_quantile_2019}. 
\end{rem}

\subsection{Testable Implications of Invariance to Transformations}

We now show that condition (\ref{eqn: parallel trends for cdfs - superpop}) has testable implications, and thus can be rejected by the data. Recall that we can re-arrange the terms in (\ref{eqn: parallel trends for cdfs - superpop}) to obtain that 
\begin{equation}
F_{Y_{i1}(0) | D_i=1}(y) = F_{Y_{i0}(0) | D_i=1}(y) + F_{Y_{i1}(0) | D_i=0}(y) - F_{Y_{i0}(0) | D_i=0}(y). \label{eqn: parallel trends cdf - rearranged}
\end{equation}


The left-hand side of (\ref{eqn: parallel trends cdf - rearranged}) is a CDF and thus must be weakly increasing in $y$, but this is not guaranteed of the right-hand side. We can thus falsify condition (\ref{eqn: parallel trends cdf - rearranged}) if we reject the null that the right-hand side of (\ref{eqn: parallel trends cdf - rearranged}) is weakly increasing in $y$. This is in fact a \textit{sharp} testable implication, since the right-hand side of (\ref{eqn: parallel trends cdf - rearranged}) is guaranteed to be right-continuous and to have limits of 0 and 1 as $y\rightarrow \pm \infty$ from the properties of the CDFs in the linear combination.

\paragraph{Statistical Testing.} We now describe how one can conduct such tests in practice. For simplicity, we will focus on testing in the case where $y$ has finite support $\mathcal{Y}$, in which case the null is equivalent to testing that the implied distribution has non-negative mass at all support points. That is, we are interested in testing that
\begin{equation}f_{Y_{i1}(0)|D_i=1}(y) = f_{Y_{i0}(0)|D_i=1}(y) + f_{Y_{i1}(0)|D_i=0}(y) - f_{Y_{i0}(0)|D_i=0}(y) \geq 0 \text{ for all } y \in \mathcal{Y}, \label{eqn:densities under null} 
\end{equation}
\noindent where $f_{Y_{it}(0) | D_i = d}(y) = \expe{ 1[Y_{it}(0) = y] \,|\, D_i=d }$ is the probability mass function of $Y_{it}(0) | D_i=1$ at $y$. We can form an estimate of the implied PMF, $\hat{f}_{Y_{i1}(0)|D_i =1}(y)$, using sample analogs to the right-hand side of (\ref{eqn:densities under null}). The null of interest is then that $E[ \hat{f}_{Y_{i1}(0)|D_i =1}(y) ] \geq 0$ for all $y$, which can be tested using methods from the moment inequality literature (\citet{CanayShaikh(17)} provide a review). Note that researchers can also plot the implied distribution $\hat{f}_{Y_{i1}(0)|D_i=1}(y)$ to visualize potential violations of the ``parallel trends of distributions'' assumption. A similar approach could be taken in the case of continuous $Y$ using methods for testing a continuum of moment inequalities (or by discretizing the outcome).

\paragraph{Empirical Illustration.} We illustrate how such tests can be used with a stylized application to the effects of the minimum wage. Our pre-treatment period is 2007 or 2010 (depending on the specification), our post-treatment period is 2015, and the treatment is whether a state raised its minimum wage at any point between the pre-treatment and post-treatment periods. The outcome of interest is individual wages $W_i$ (where $W_i=0$ if $i$ is not working). We use data from \citet{cengiz_effect_2019}, who compile panel data on state-level minimum wages and employment-to-population ratios in 25-cent wage-bins.\footnote{Following \citet{cengiz_effect_2019}, wages are adjusted for inflation using the CPI-UR-S and expressed in constant 2016 dollars, and we exclude from the treated group states that only had small minimum wage changes of less than 25c (in 2016 dollars) or which affected less than 2 percent of the workforce.} Note that the employment-to-population ratio in wage bin $w$ corresponds with the mass function of $W_i$ at $w$. For each wage bin $w$, we infer the treated population's counterfactual employment-to-population ratio in wage bin $w$ as $\hat{f}_{Y_{i,post}(0)|D_i=1}(w) = \hat{f}_{Y_{i,pre}|D_i=1}(w) + (\hat{f}_{Y_{i,post}|D_i=0}(w) - \hat{f}_{Y_{i,pre} | D_i=0}(w))$, where $\hat{f}_{Y_{i,t}|D_i=d}$ is the sample employment-to-population ratio in period $t$ in states in treatment group $d$. In all calculations, we weight states by their population. 

\begin{figure}[!htb]
    \centering
    \includegraphics[width = 1\linewidth]{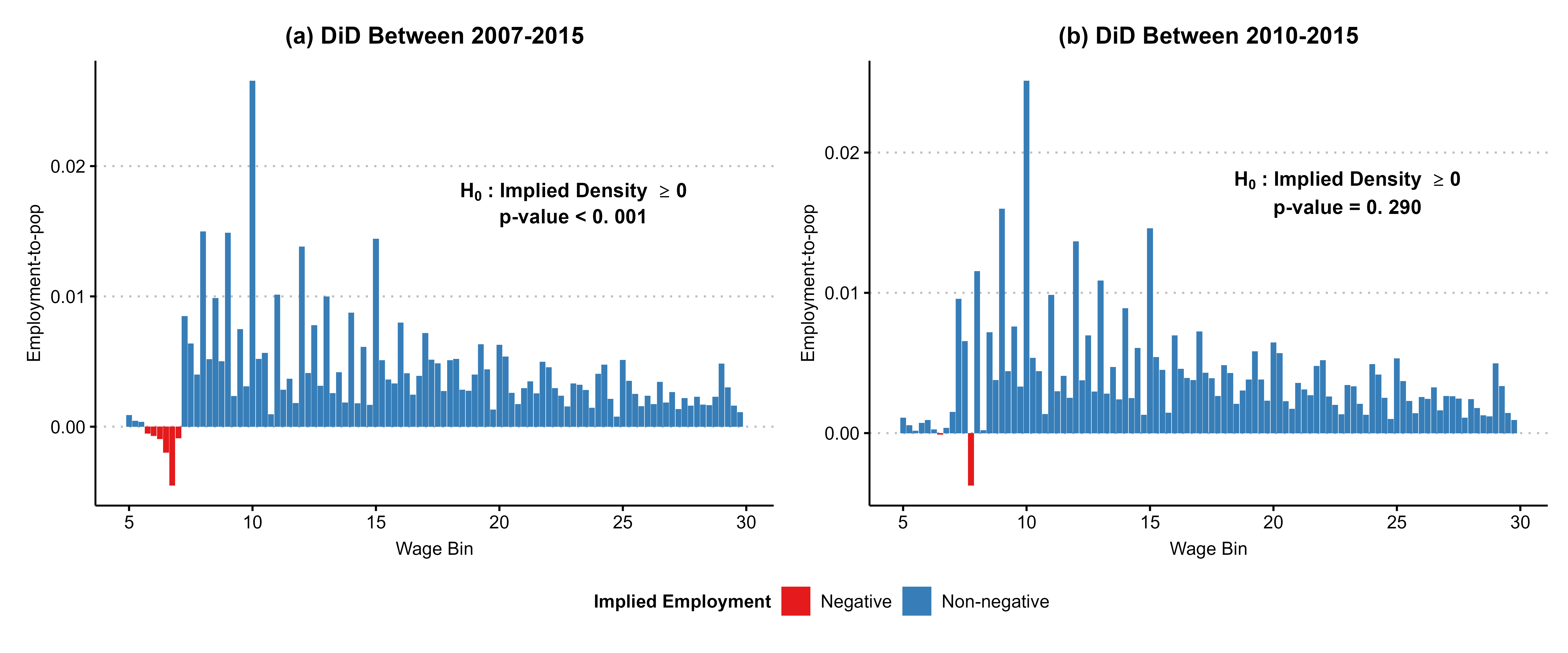}
    \caption{Implied Employment-to-Population Ratios for Treated States in 2015}
    \label{fig:mw-implied-densities}
\end{figure}

Figure \ref{fig:mw-implied-densities} shows the implied counterfactual densities $\hat{f}_{Y_{i,post}(0)|D_i=1}(w)$ under parallel trends of distributions. In the left panel, where the pre-treatment period is 2007, the figure shows that the implied density is \textit{negative} for wages between approximately \$5-7/hour. Intuitively, this occurs because the decrease in employment in such wage bins in comparison states between 2007 and 2015 is larger than the initial employment in treated states (who had lower baseline levels in these bins). One possible explanation for this pattern is that the increase in the federal minimum wage over this time period had a disproportionate impact on low-wage employment in states that did not have state-level minimum wages. To formally test the null that the implied density $f_{Y_{i,post}(0)|D_i=1}(w)$ is non-negative for all $w$, we estimate the variance-covariance matrix of the $\hat{f}_{Y_{i,post}(0)|D_i=1}(w)$ using a bootstrap at the state level, and then compare the minimum studentized value to a ``least-favorable'' critical value for moment inequalities that assumes all of the moments have mean 0 (see, e.g., Section 4.1.1 of \citet{CanayShaikh(17)}). Using such tests, we are able to reject the null hypothesis that all of the implied densities are positive ($p < 0.001$). 
This suggests that a researcher using such a DiD analysis to estimate the ATT for an outcome of the form $g(W_i)$ should be careful to justify the validity of the parallel trends assumption for the particular choice of functional form. 
By contrast, in the right panel of Figure \ref{fig:mw-implied-densities}, which shows results using the period 2010-2015, we see that the estimated counterfactual distribution has positive density nearly everywhere, and we cannot formally reject the hypothesis that it is positive everywhere ($p=0.29$). This does not necessarily imply that parallel trends holds for all transformations of the outcome, but insensitivity to functional form is not rejected by the data in this example.

\paragraph{Caveats.} We emphasize that failure to reject the null does not necessarily imply that parallel trends is insensitive to functional form, and relying on such tests may induce issues related to pre-testing; see \citet{roth_pre-test_2021} for a related discussion regarding testing for pre-existing trends. Nevertheless, such tests may be useful for detecting cases where it is clear from the data that parallel trends will be sensitive to functional form.

\subsection{Extensions}

\paragraph{Other classes of functions.} Following \citet{AtheyImbens(06)}, we focus on when parallel trends is invariant to all strictly monotonic transformations. One can show that parallel trends for all strictly monotonic transformations is in fact equivalent to parallel trends for all (measurable) transformations.\footnote{This is not the case for other assumptions --- e.g., the CiC model is invariant to strictly monotonic transformations but not to all measurable transformations.} In the working paper version of this paper, we showed that if one restricts to a smaller subset of transformations, then the counterfactual distribution for the treatment group may only be partially identified (see Appendix B of \citealp{roth2021when_WP}).

\paragraph{Other estimators.} We showed in the working paper version of this article that the $ATT$ is identified for all functional forms if and only if the full counterfactual distribution for the treated group, $F_{Y_{i1}(0)|D_i=1}(\cdot)$, is identified (see Proposition 4.1 of \citealp{roth2021when_WP}). This implies that to obtain any consistent estimator of the ATT (not just DiD), one must impose assumptions that either depend on functional form or that pin down the full counterfactual distribution of $Y(0)$ for the treated group.

\paragraph{Use of pre-treatment periods.} In settings with multiple pre-treatment periods, researchers may be inclined to use the absence of pre-existing trends as a justification for the validity of parallel trends for a particular functional form. However, as noted in \citet{kahn-lang_promise_2020}, pre-treatment parallel trends is neither necessary nor sufficient for post-treatment parallel trends (for a given functional form). Indeed, we showed in the working paper version of this article that if the support of $Y(0)$ is sufficiently rich, then there will be multiple transformations $g$ such that parallel trends holds in the pre-treatment period and a counterfactual post-treatment distribution such that it fails for at least one of these $g$. In finite samples, pre-trends tests may also have low power and relying on them can induce pre-test bias \citep{roth_pre-test_2021}.

\section{Implications for Applied Work}

Our results help to clarify the different paths a researcher can take
to justify point-identification of the ATT when considering using DiD or a related research design. First, the researcher can argue that treatment is as-if randomly assigned, which ensures identification of the average treatment effect for all functional forms. Second, the researcher can impose distributional assumptions that pin down the full counterfactual distribution of $Y_{i1}(0)$ for the treated group. This enables one to obtain an estimator that is valid for the ATT without additional functional form restrictions. For example, imposing parallel trends of CDFs ensures the validity of the DiD estimator for all functional forms. Finally, the researcher can use context-specific knowledge to argue for the validity of the parallel trends assumption for a particular functional form. A model with a Cobb-Douglas production function for $Y(0)$, for example, may imply parallel trends in logs but not levels. Our hope is that the results in this paper will be useful in providing researchers with a menu of options for more clearly delineating the justification for their research design.

\putbib
\end{bibunit}
\appendix
\begin{bibunit}

\section{Proof of Proposition \ref{prop: generation of parallel trends of cdfs}\label{sec: appendix proof of decomp of cdfs into mixtures}}
\begin{proof}
By Proposition \ref{prop: invariance to transformation did - superpop}, it suffices to show equivalence with (\ref{eqn: parallel trends for cdfs - superpop}). Observe that if (\ref{eqn: decomp of cdfs into mixtures}) holds, then both sides of (\ref{eqn: parallel trends for cdfs - superpop}) reduce to $\theta(G_{1}(y) - G_{0}(y))$, and so (\ref{eqn: decomp of cdfs into mixtures}) implies (\ref{eqn: parallel trends for cdfs - superpop}). To prove the converse, let $\mathcal{Y}$ denote the parameter space for $Y(0)$, and $\mathcal{Y}_y = \{ \tilde{y} \in \mathcal{Y} \,|\, \tilde{y} \leq y \}$. By assumption, we can write
$$F_{Y_{it}(0) | D_i = d}(y) = \int_{ \mathcal{Y}_y } f_{Y_{it}(0)| D_i=d} \, d \lambda,$$
\noindent where $\lambda$ is the dominating measure and $f_{Y_{it}(0)| D_i=d}$ is the density (the Radon-Nikodym derivative). It is immediate from the previous display that if (\ref{eqn: parallel trends for cdfs - superpop}) holds, then $f_{Y_{i1}(0)|D_i=1} - f_{Y_{i0}(0)|D_i=1} = f_{Y_{i1}(0)|D_i=0} - f_{Y_{i0}(0)|D_i=0} $, $\lambda \text{ a.e.}$ The desired result then follows by applying Lemma \ref{lem: helper lemma for decomp proof} to the CDFs on both sides of (\ref{eqn: parallel trends for cdfs - superpop}). 
\end{proof}

\begin{lem} \label{lem: helper lemma for decomp proof}
Suppose the CDFs $F_1$ and $F_2$ are such that $F_j(y) = \int_{\mathcal{Y}_y} f_j d\lambda$. Then we can decompose $F_j(y)$ as
\begin{equation}
F_j(y) = (1-\theta) F_{min}(y) + \theta \tilde{F}_j(y), \label{eqn: decomp of cdfs in proof}    
\end{equation}
where $F_{min}$ and $\tilde{F}_1, \tilde{F}_2$ are CDFs, $\theta \in [0,1]$, and $\theta$ and $\tilde{F}_j$ depend on $f_1$ and $f_2$ only through $f_1 - f_2$. 
\end{lem}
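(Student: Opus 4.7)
The plan is to build the decomposition explicitly out of the pointwise minimum of the two densities. I would set $f_{\min}(y) = \min(f_1(y), f_2(y))$ $\lambda$-a.e.\ and define $\theta = 1 - \int f_{\min}\, d\lambda$, so $\theta \in [0,1]$ automatically. Because $\int f_1\, d\lambda = \int f_2\, d\lambda = 1$, both $\int (f_1 - f_{\min})\, d\lambda$ and $\int (f_2 - f_{\min})\, d\lambda$ also equal $\theta$, supplying exactly the normalizations that will turn the three integrated objects into CDFs.

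Then, when $\theta \in (0,1)$, I would set $F_{\min}(y) = (1-\theta)^{-1} \int_{\mathcal{Y}_y} f_{\min}\, d\lambda$ and $\tilde{F}_j(y) = \theta^{-1} \int_{\mathcal{Y}_y} (f_j - f_{\min})\, d\lambda$. Each is monotone nondecreasing and normalized to one since $f_{\min} \ge 0$ and $f_j - f_{\min} \ge 0$ pointwise. The decomposition (\ref{eqn: decomp of cdfs in proof}) then follows algebraically from the identity $f_j = f_{\min} + (f_j - f_{\min})$. The degenerate cases $\theta = 0$ (where $f_1 = f_2$ $\lambda$-a.e., so the decomposition is trivial with $F_{\min} = F_1 = F_2$) and $\theta = 1$ (where $f_{\min} = 0$ a.e., so $f_j - f_{\min} = f_j$) I would dispatch by taking the unused CDF to be any fixed arbitrary CDF.

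The conceptually important step is the identity $f_1 - f_{\min} = (f_1 - f_2)_+$ and $f_2 - f_{\min} = (f_1 - f_2)_-$, where $(\cdot)_\pm$ denote positive and negative parts. From this it is transparent that both $\theta$ and $\tilde{F}_j$ are functionals of $f_1 - f_2$ alone, which is precisely the dependence claim in the lemma. Viewed abstractly, the construction is just the Hahn/Jordan decomposition of the signed measure with density $f_1 - f_2$, repackaged as a mixture representation of $F_1$ and $F_2$; and it is this dependence-only-on-the-difference feature that lets the proof of Proposition \ref{prop: generation of parallel trends of cdfs} match up the $\theta$'s and $\tilde{F}_j$'s across the two groups $d$.

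I do not anticipate a real obstacle: the construction is entirely explicit, and all that remains is bookkeeping to check nonnegativity of each integrand, correct normalization, and clean handling of the boundary $\theta \in \{0,1\}$. The step most worth flagging, if any, is the positive/negative-part identity above --- it is short, but it is the reason the dependence structure the lemma claims actually holds.
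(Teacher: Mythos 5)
Your construction is exactly the one the paper uses: the same $\theta = 1 - \int \min\{f_1,f_2\}\,d\lambda$, the same normalized mixture components, the same handling of the boundary cases, and the same positive/negative-part identity $f_j - f_{\min} = (f_j - f_{3-j})_+$ to establish dependence on $f_1 - f_2$ alone. The proposal is correct and matches the paper's proof essentially line for line.
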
 
\begin{proof}

To prove the claim, set $\theta =1- \int_{\mathcal{Y}} \min\{f_1, f_2 \} d\lambda$. It is immediate that $\theta \in [0,1]$. Suppose first that $\theta \in (0,1)$. Define
\begin{align*}
&f_{min} = \dfrac{ \min\{f_1, f_2 \} }{ \int_{\mathcal{Y}} \min\{f_1, f_2 \} d\lambda } =  \dfrac{ \min\{f_1, f_2 \} }{1-\theta}
\end{align*}
\noindent and
\begin{align*}
&\tilde{f}_j(x) = \dfrac{ f_j - \min\{f_1, f_2 \} }{ \int_{\mathcal{Y}} (f_j - \min\{f_1, f_2 \}) d\lambda } = \dfrac{ f_j - \min\{f_1, f_2 \} }{\theta } \text{ for } j = 1,2.     
\end{align*}
\noindent By construction, $f_{min}$ and the $\tilde{f}_j$ integrate to 1 and are non-negative, so that $F_{min}(y) = \int_{\mathcal{Y}_y} f_{min} d\lambda$ and $\tilde{F}_j(y) = \int_{\mathcal{Y}_y} \tilde{f}_j d\lambda$ are valid CDFs. Moreover, $f_j = (1-\theta) f_{min} + \theta \tilde{f}_j$ by construction, so that (\ref{eqn: decomp of cdfs in proof}) holds. Finally, observe that $\min\{f_1,f_2\} = f_1 - (f_1 - f_2)_+$, where $(a)_+$ denotes the positive part of $a$. It follows that $\theta = 1-\int_{\mathcal{Y}}(f_1 - (f_1 - f_2)_+) d\lambda = \int_{\mathcal{Y}} (f_1 - f_2)_+ d\lambda$, which depends only on $f_1 - f_2$. (In fact, note that $\int_{\mathcal{Y}} (f_1 - f_2)_+ d\lambda = \frac{1}{2} \int_{\mathcal{Y}} |(f_1 - f_2)| d\lambda$, and thus $\theta$ is the total variation distance between $f_1$ and $f_2$.) Likewise, $\tilde{f}_1 = (f_1 - f_2)_+ / \theta$ and $\tilde{f}_2 = (f_2 - f_1)_+ / \theta$, and so depend only on $f_1 - f_2$. This completes the proof for the case where $\theta \in (0,1)$. If $\theta = 0$, then $F_1(y) = F_2(y)$ and so the claim holds trivially with $F_{min}(y) = F_1(y) = F_2(y)$ and $\tilde{F}_j(y)$ arbitrary. If $\theta = 1$, then $\min\{ f_1, f_2\} = 0 \; \lambda \text{ a.e}$, and so $f_1 = (f_1 - f_2)_+ \; \lambda \text{ a.e.},$ and $f_2 = (f_2 - f_1)_+ \; \lambda \text{ a.e.}$. Thus, the claim holds trivially with $\tilde{f}_j = f_j$, $\tilde{F}_j(y) = \int_{\mathcal{Y}_y} \tilde{f}_j d\lambda$, and $F_{min}(y)$ arbitrary.   
\end{proof}

\end{bibunit}
\end{document}